\newtheorem{theorem}{Theorem}
\newtheorem{lemma}{Lemma}
\newtheorem{proposition}{Proposition}
\theoremstyle{definition}
\newtheorem{definition}{Definition}
\theoremstyle{remark}
\newcounter{tmp}
\newcommand{\bb}[1]{\mathbf{#1}}
\definecolor{darkblue}{RGB}{0,101,204}
\definecolor{carorange}{RGB}{255,131,0}
\title{\LARGE \bf
 An Extended Game--Theoretic Model for Aggregate Lane Choice Behavior of Vehicles at Traffic Diverges with a Bifurcating Lane}
\author{Ruolin Li, Negar Mehr and Roberto Horowitz
\thanks{{Ruolin Li, Negar Mehr and Roberto Horowitz are with the Mechanical Engineering Department, University of California, Berkeley, CA, USA.
	{\tt\small \{ruolin\_li, negar.mehr, horowitz\}@berkeley.edu}}
}
}
\begin{document}
\maketitle

\thispagestyle{empty}
\pagestyle{empty}

\begin{abstract}
Road network junctions, such as merges and diverges, often act as bottlenecks that initiate and exacerbate congestion.
More complex junction configurations lead to more complex driver behaviors, resulting in aggregate congestion patterns that are more difficult to predict and mitigate.
In this paper, we discuss diverge configurations where vehicles on some lanes can enter only one of the downstream roads, but vehicles on other lanes can enter one of several downstream roads.
Counterintuitively, these bifurcating lanes, rather than relieving congestion (by acting as a versatile resource that can serve either downstream road as the demand changes), often cause enormous congestion due to lane changing.
We develop an aggregate lane--changing model for this situation that is expressive enough to model drivers' choices and the resultant congestion, but simple enough to easily analyze.
We use a game--theoretic framework to model the aggregate lane choice behavior of selfish vehicles as a Wardrop equilibrium (an aggregate type of Nash equilibrium).
We then establish the existence and uniqueness of this equilibrium.
We explain how our model can be easily calibrated using simulation data or real data, and we present results showing that our model successfully predicts the aggregate behavior that emerges from widely--used behavioral lane--changing models.
Our model's expressiveness, ease of calibration, and accuracy may make it a useful tool for mitigating congestion at these complex diverges.

\end{abstract}

\section{Introduction}\label{intro}
Road traffic congestion is a major source of inefficiency in modern society.
One study~\cite{urban_mobility_2015} estimated that, in 2014 in the U.S. alone, delays due to congestion cost drivers over 7 billion hours, had a social cost of US~\$160 billion, and led to the burning of 3 billion extra gallons of fuel.

In the transportation community, it is well known that vehicles' lane change maneuvers can be a significant cause of congestion. Unfortunately, lane changes are notoriously difficult to model, both because it is difficult to predict when and why a driver might change lanes \emph{and} how their maneuver will affect the movements of other vehicles~\cite{zheng_recent_2014}.  In what follows, we briefly
review relevant papers to our work, beginning with papers that analyze lane changing from the microscopic perspective and following with papers that analyze it from the macroscopic perspective.

A significant number of papers that analyze lane changes at the  ``microscopic level'' focus on determining accurate yet simple  driver behavior models that will be able to reproduce actual individual vehicle's lane changing behavior on a variety of scenarios. In~\cite{gipps1986model}, a model of a driver's lane changing decision-making process is formulated, focusing on  decisions that balance safety concerns with lane-changing incentives. These decisions require a series of evaluations concerning the velocity and proximity of the vehicles that surround the {\it ego vehicle} in the lane that it wishes to change to. Much work, both theoretical and simulation-based, has taken place on so-called ``gap acceptance models'', which model whether a driver will attempt to change lanes as a function of the inter-vehicle gaps that arise in their target lane (see~\cite{halati1997corsim,kazi1999acc,toledo2003lanechanging}, among others). Subsequently, researchers have explored a variety of rule-based microscopic models. In~\cite{kesting2007general}, for example, a new lane changing model, MOBIL, is proposed to minimize the overall braking induced by lane change. Microscopic level research in vehicles' lane changing behavior also considers various traffic scenarios and road configurations. For example, in~\cite{kita1999merging,hidas2005merge}, the lane changing behavior in a merging scenario is studied. Recently, a series of papers have been presented that study  microscopic lane changing behavior using game theory, such as~\cite{talebpour2015modeling, meng2016dynamic}, which have brought new perspectives and insights to the field.

\begin{figure}
    \centering
    \includegraphics[width=0.55\textwidth,height=7cm]{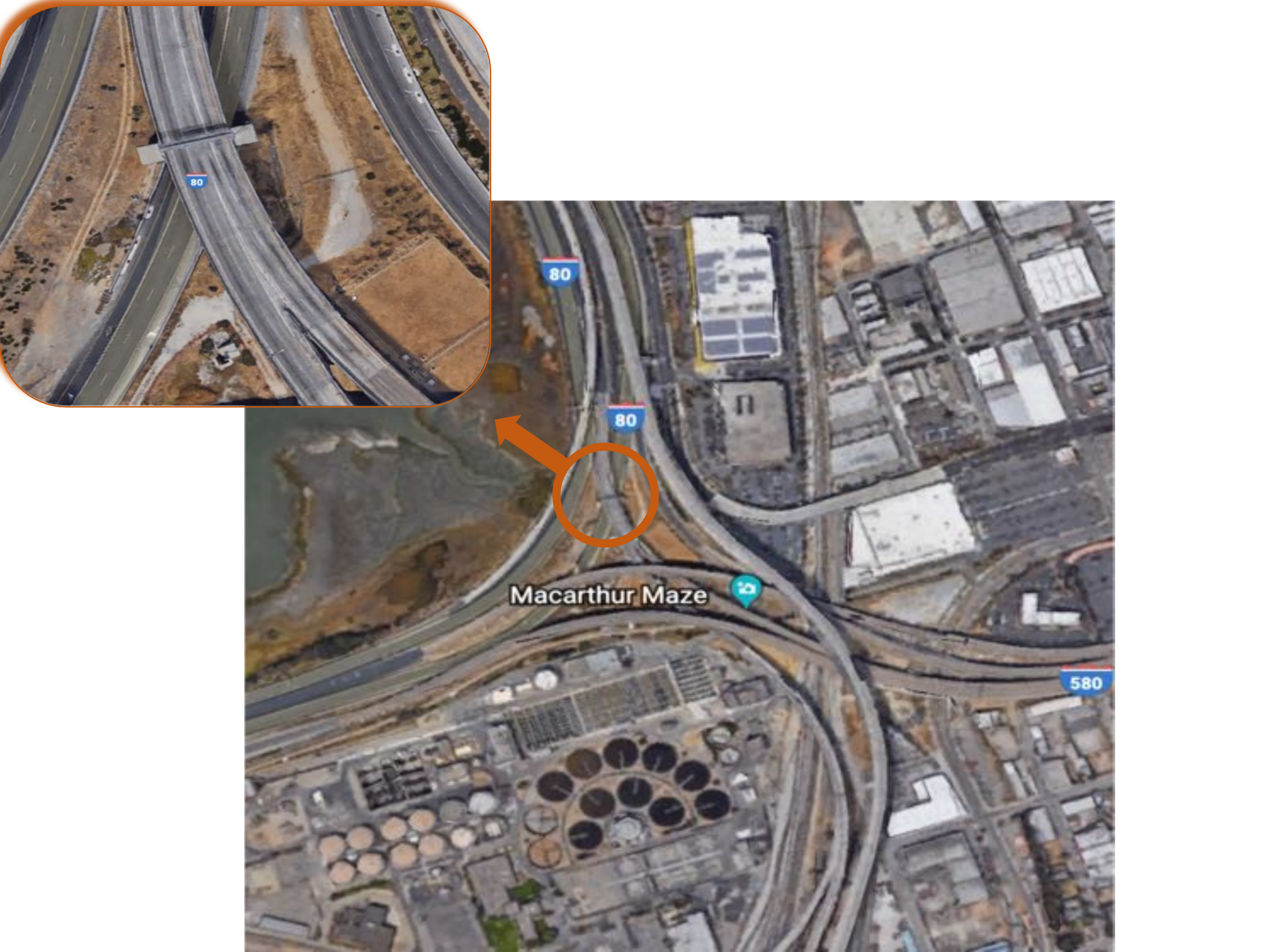}
    \caption{Satellite imagery from Google Earth of the MacArthur Maze with an enlarged view of a typical diverge with a bifurcating lane.}\label{fig:mac}
\end{figure}

Another considerable part of related work addresses the ``macroscopic'' impact of vehicles' lane changing behavior, i.e., how lane changes affect other vehicles and the aggregate traffic flow.
In~\cite{glchang1991macrocharac,brackstone1998lane}, researchers studied the macroscopic characteristics that could affect vehicles' lane changing behavior. Subsequently, in~\cite{daganzo2002merge}, lane changing behaviors of vehicles entering a freeway on-ramp and how it affects the onset of congestion are examined. In~\cite{coifman2005lane}, lane changing behaviors are also explored from a macroscopic perspective and it is shown that lane changing behaviors frequently cause the well known  ``freeway capacity drop" phenomenon. Subsequently,~\cite{laval2006lane,laval2007impacts} analyzed the  macroscopic impacts qualitatively and quantitatively of lane changing behaviors focused on the section of lanes away from freeway diverges, by modeling lane changing vehicles as particles linking interactive streams on different lanes. In our previous work~\cite{negar2018bypassing}, a scenario where vehicles bypass at the end of the diverge is studied, and the macroscopic choice behavior of vehicles of such a process is modeled as a Wardrop equilibrium~\cite{wardrop1952some}. The resulting model shows an impressive predictive power and can be easily calibrated.

In this paper, we extend the framework in~\cite{negar2018bypassing} to analyze another commonly encountered traffic diverging scenario that is a frequent cause of bottlenecks: {\em bifurcating} lanes at traffic diverges. Figure~\ref{fig:diverge} illustrates this diverge scenario, where the center lane ``b" bifurcates such that vehicles in that lane must choose either to turn left or right.
Bifurcating lanes are encountered in (poorly designed) complex distribution structures, such as the four-freeway interchange in the San Francisco Bay Area known as the MacArthur Maze (see Figure~\ref{fig:mac}), where consecutive diverges with bifurcating lanes are employed to split the  I-80 west / I-580 east (Eastshore Freeway) traffic into 1) traffic going towards  San Francisco via the I-80 west (Bay Bridge), 2) traffic going towards downtown Oakland, Walnut Creek, Hayward or Stockton via   I-580 east (MacArthur Freeway) and I-980, and 3) traffic going towards San Jose via I-880 south (Nimitz Freeway).  An aerial photograph of the MacArthur Maze is shown in Figure~\ref{fig:mac}.

Vehicles targeting one of the two exit links of the diverge face  two choices. One is to employ a feed-through lane lane (either lane "a" or lane "c" Fig.~\ref{fig:diverge}), while the other is to employ the bifurcating lane. With the reasonable assumption that drivers choose their routes in a \textit{selfish} manner in order to minimize their travel time or effort, vehicles would only employ the bifurcating middle lane in order to  save time or effort, as  compared to using the exit's respective feed--through lane, and vice versa. It turns out that vehicular lane choice behavior for this scenario has its own interesting characteristics, which to the best of our knowledge,  has not been previously addressed in the literature. In this paper, we first derive a model that describes the decision making process encountered by drivers at such traffic diverges and then we obtain the macroscopic lane choices made by drivers by solving the model's corresponding Wardrop equilibrium. It should be emphasized that the calibration of our  model only requires traffic flow information, which is realistically attainable.

This paper is organized as follows.  In Section~\ref{sec:model}, we first provide a detailed description of the notation used throughout the paper and subsequently derive a model that describes the decision making process of drivers at traffic diverges with middle bifurcating lanes. In Section~\ref{sec:eq_prop}, we establish the existence and uniqueness of the Wardrop equilibrium introduced by our model. In Section~\ref{sec:model_val}, we describe our model calibration and validation process using microscopic traffic simulation data. 
Finally, in Section~\ref{sec:future}, we conclude our work and briefly discuss  future directions.



\section{The Model} \label{sec:model}

\begin{figure}
\centering
\begin{tikzpicture}[scale=0.7]
  \fill[gray!30] (0,0) [rounded corners = 10pt]-- (0,3.5) [rounded corners = 0pt]-- (-1,7)  -- (3,7) -- (11/3,5)-- (13/3,7)--(25/3,7) [rounded corners = 10pt]--(22/3,3.5)-- (22/3,0);

  \draw[very thick,rounded corners=10pt] (0,0) -- (0,3.5)-- (-1,7);
  \draw[very thick,rounded corners=10pt] (25/3,7) --(22/3,3.5)-- (22/3,0);
  \draw[very thick] (3,7) -- (11/3,5)-- (13/3,7);

  \draw[very thick, dash pattern=on 5pt off 5pt, white] (22/9,0)[rounded corners = 10pt] -- (22/9,3)-- (1,7);
  \draw[very thick, dash pattern=on 5pt off 5pt, white] (44/9,0)[rounded corners = 10pt] -- (44/9,3) -- (19/3,7);

  \node at (4/3,6) {1};  
  \node at (6,6) {2};

  \node at (11/9,0.3) {a};
  \node at (33/9,0.3) {b};
  \node at (55/9,0.3) {c};
  
  \node[scale=0.4,rotate=270] at (1,1.5) {\includegraphics{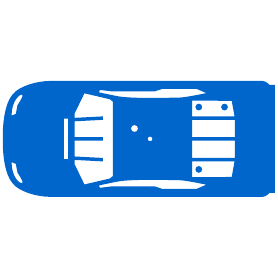}};
  \node[scale=0.4,rotate=270] at (3.6,1.5) {\includegraphics{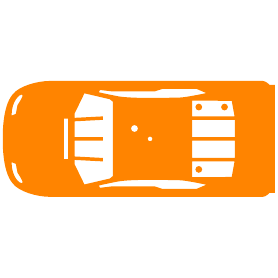}};

  \draw[thick, dashed, darkblue] (1,1.5)  .. controls (1,4) and (0.5,6) .. (0,7);
  \draw[thick, dashed, carorange] (3.6,1.5)  .. controls (3.6,4) and (2.5,6) .. (2,7);

  \node[darkblue,scale=1] at (1.7,3) {$x^f_1$};
  \node[carorange,scale=1] at (4.3,3) {$x^b_1$};

\end{tikzpicture}
\caption{Problem setting: a traffic diverge with a bifurcating lane targeting two exit links.}
\label{fig:diverge}
\end{figure}

In this paper, we consider a vehicular traffic diverge with a middle bifurcating lane. Such diverges are commonly used in a variety of modern transportation roadways, such as the well--known MacArthur Maze (see Figure~\ref{fig:mac}), in the San Francisco Bay Area. A schematic of such a traffic diverge  is shown in Figure~\ref{fig:diverge}. The diverge has two exit links, respectively denoted as link $1$ and link $2$, each consisting of two lanes, and a single entry link with three lanes respectively denoted by the letters $a,\,b,$ and $c$. Vehicles traveling along feed-through lanes  $a$ and $c$  respectively exit directly to links $1$ and  $2$. However, vehicles traveling along the middle bifurcating lane $b$ can exit to either link. Thus, drivers  upstream of such a traffic diverge need to decide between two options. They can either use the feed--through lane that is designed to exclusively  serve its corresponding destination link, or conversely they can use the middle bifurcating lane,  which targets both exit links and is shared by vehicles targeting both exit links. For example, referring to Fig.~\ref{fig:diverge}, vehicles targeting  link $1$ may choose to utilize feed--through lane $a$ or the bifurcating lane $b$. To model lane changing behavior upstream of this diverge, we consider 4 classes of vehicles. For either exit link, vehicles are either feed--through lane users or bifurcating lane users.

Here, some  key points should be clarified. First, in this paper we model  the macroscopic lane changing behavior of a given vehicular flow, instead of attempting to model the decision making process of  individual vehicles. Thus, features related to individual vehicles' choices are not considered. 
Second, we will set the flow demands of both exit links within a feasible range, in order to exclude  potential teleporting behavior of vehicles (i.e. vehicles disappear at the upstream link and then re--appear at a link downstream of the diverge), which is observed in several microscopic traffic models. 
Only bifurcating lane choice behavior will be studied in this work.
Third, we assume that the capacity of either exit link is large enough to accommodate the corresponding demand. The impact of capacity drop downstream of the diverge is not considered in this work.

Let $I = \{1,2\}$ be the index set of exit links at the diverge, and let $L = \{a,b,c\}$ be the index set of the entry link's lanes. At the diverge in Figure~\ref{fig:diverge}, lane $a$ is the feed--through lane targeting exit link $1$. Lane $c$ is the feed--through lane targeting exit link $2$. Lane $b$ is the middle bifurcating lane targeting both link $1$ and $2$. We assume that the total demand for the diverge is fixed and given. For each exit link $i\in I$, let $d_i$ be the demand of vehicles targeting exit link $i$, and let $q_i := \frac{d_i}{\sum_{i\in I}d_i}$ be the normalized demand of vehicles with destination link $i$. We collect the normalized demands and let $\mathbf{Q} := (q_i: i \in I)$ be the normalized demand configuration vector. For a diverge with the exit index set $I$ and the normalized demand configuration vector $\mathbf{Q}$, we should have $\sum_{i\in I}q_i=1$. 

For each exit link $i\in I$, let $n^f_i$ be the exact flow of feed--through lane users with destination link $i$ and let $n^b_i$ be the exact flow of bifurcating lane users with destination link $i$. For each $i \in I$, let $x^f_i:=\frac{n^f_i}{\sum_{i\in I}d_i}$ be the proportion of feed--through lane users with destination link $i$. Likewise, for each $i \in I$, we let $x^b_i:=\frac{n^b_i}{\sum_{i\in I}d_i}$ be the proportion of bifurcating lane users with destination link $i$. We then collect the proportions of the four classes of vehicles transiting through the diverge into the vector
$\mathbf{x} := (x_i^f, x_i^b: i \in I)$. For a given normalized demand configuration vector $\mathbf{Q}$, we will use our model to predict the flow distribution vector $\mathbf{x}$. A flow vector $\mathbf{x}$ is feasible if it is non--negative and it satisfies  flow conservation:
\begin{equation}\label{eq:flowcons}
    \begin{aligned}
    &q_i = x_i^f + x_i^b, \quad \forall i \in I, \\
    &x^f_i \geq 0, x^b_i \geq 0, \quad \forall i \in I.
    \end{aligned}
\end{equation}

We assume all vehicles are \emph{selfish} in that drivers will choose the route that minimizes their travel cost, which will be defined subsequently. That is to say, in this bifurcating lane scenario, vehicles would only choose the bifurcating lane when the cost experienced when traveling through the bifurcating lane is less than the cost experienced when traveling through the feed--through lane. We assume that vehicles of the same class, which are vehicles targeting the same destination link and utilizing the same entry lane, experience the same cost.  We now model the cost experienced by each class of vehicles. For each exit link $i\in I$, let $J_i^f$ denote the cost experienced by the feed--through lane users with destination link $i$, let $J_i^b$ denote the cost experienced by the bifurcating lane users with destination link $i$. We now postulate that
\begin{align}\label{eq:J_i^f_ori}
J_i^f(\mathbf{x}) &= C_i^f x_i^f \,.
\end{align}
For each exit link $i\in I$, we let $C_i^f$  represent the cost incurred by  feed--through lane users targeting exit link $i$. For each exit link $i\in I$, we assume that $C_i^f>0$ and it is a parameter related to the intrinsic features of the utilized feed--through lane, including the geometry, speed limits and  other factors. To be specific in Figure~\ref{fig:diverge}, for exit link $1$, $C_1^f$ should be a parameter related to the intrinsic features of the feed--through lane $a$. For each exit link $i\in I$, since the feed--through lane is only shared by the feed--through lane users that exit through link $i$, the cost experienced by all users should be proportional to the proportion of the feed--through lane users that exit through link $i$. Thus, for each exit link $i\in I$, we let the cost, $J_i^f$, to be the product of $C_i^f$ and $x_i^f$, as described by Eq.~(\ref{eq:J_i^f_ori}).

Let us now focus on the cost experienced by users of the middle bifurcating lane. For exit links $i\neq j\in I$, we model the cost experienced by the bifurcating lane users with destination link $i$ as
\begin{align}\label{eq:J_i^b_ori}
J_i^b(\mathbf{x}) &= C^b \left(\lambda_i x_i^b + \mu_i x_j^b \right) + \nu x_i^b x_j^b,
\end{align}
where $C^b$ is a parameter characterizing the cost incurred by  bifurcating lane users targeting either exit link. Likewise to feed--through lanes, we assume $C^b>0$ and that it is a  parameter related to the intrinsic features of the bifurcating lane. Since the bifurcating lane is shared by the bifurcating lane users for either exit link, the cost experienced by the users should be proportional to the sum of the proportions of the bifurcating lane users for either exit. 
Vehicles travelling along the bifurcating lane must take either of the two exit links at the end of the diverge, which might give rise to a sudden capacity increase for the bifurcating lane users targeting either exit link. This expected capacity increase would reduce the cost experienced by the bifurcating lane users. To account for this phenomenon, we define for each exit link $i$ two positive parameters $\lambda_i\leq 1$ and $\mu_i\leq 1$. These parameters  respectively  characterize the possible capacity increase effect on bifurcating lane users targeting exit link $i$ incurred by bifurcating lane users with the same destination link and the effect incurred by bifurcating lane users with a different destination link. If the effect is the same regardless of the inconsistency of the destination link, we should have $\lambda_i=\mu_i$; otherwise, $\lambda_i \ne \mu_i$. If either of the capacity increase effect is negligible, we will have $\lambda_i= 1$ or $\mu_i= 1$.

The second term in Eq.~(\ref{eq:J_i^b_ori}), which has the positive constant parameter $\nu$ is used to account for the detrimental effect on travel cost induced by the destination heterogeneity of the bifurcating lane users. This detrimental effect should increase when either of the two bifurcating lane vehicular flows increase. Therefore, we utilize the product between the two bifurcating lane proportions $\nu x_i^b x_j^b$, $i \ne j \in I$ in this term.

Thus, Eqs.~(\ref{eq:J_i^f_ori}) and (\ref{eq:J_i^b_ori}) for $i\ne j \in I$, are used to describe the costs experienced by vehicles traveling through the diverge. We collect these cost parameters in these equations, and define $\mathbf{C} = (C_i^f,C^b,\lambda_i,\mu_i,\nu: i \in I)$ to be the cost coefficient vector characterizing the diverge.

Having modeled the costs~\eqref{eq:J_i^f_ori} and~\eqref{eq:J_i^b_ori}, we will model vehicles' choice behavior on the macroscopic level. We assume that once a lane choice is made by the vehicle, it stays on its chosen lane. Since we have assumed that all vehicles are \emph{selfish}, vehicles would only choose the bifurcating lane when the cost experienced by traveling on the bifurcating lane is smaller than the cost experienced by traveling on the corresponding feed--through lane. Using the notations in our model, we can say, at the equilibrium of vehicles' choice behavior, if $x_i^b>0$, then we must have $J_i^b(\mathbf{x})\leq J_i^f(\mathbf{x})$; likewise, if $x_i^f>0$, then we must have $J_i^f(\mathbf{x})\leq J_i^b(\mathbf{x})$. Therefore, at an equilibrium of our model, if $J_i^f(\mathbf{x}) > J_i^b(\mathbf{x})$, then $x_i^f=0$; if $J_i^f(\mathbf{x}) < J_i^b(\mathbf{x})$, then $x_i^b=0$; only if $J_i^f(\mathbf{x}) = J_i^b(\mathbf{x})$, $x_i^b$ and $x_i^f$ may both be nonzero. These conditions can be formulated as a Wardrop equilibrium~\cite{wardrop1952some}. Now, let $\mathbf{C} = (C_i^f,C^b,\lambda_i,\mu_i,\nu: i \in I)$ be the cost coefficient vector and $\mathbf{Q} = (q_i: i \in I)$ be the normalized demand configuration vector. Let $G = (\mathbf{Q}, \bb{C})$ be a tuple configuring a traffic diverge in Figure~\ref{fig:diverge}, we interpret the above equilibrium conditions of our model and give the formal definition of the equilibrium of our model:
\begin{definition}\label{def:wdp}
For a given $G = (\mathbf{Q}, \bb{C})$, a flow distribution vector $\mathbf{x}$ is an equilibrium if and only if for every $i \neq j \in I$, we have
\begin{equation}\label{eq:eq_def}
    \begin{aligned}
    x_i^f (J_i^f(\mathbf{x}) - J_i^b(\mathbf{x})) &\leq 0 ,\\
    x_i^b (J_i^b(\mathbf{x}) - J_i^f(\mathbf{x})) &\leq 0.
    \end{aligned}
\end{equation}
\end{definition}

Now that we have modeled the cost experienced by each class of vehicles and model the resulting choice equilibrium as a Wardrop equilibrium described in Definition~\ref{def:wdp}, we can use this model to predict the proportion of bifurcating lane users and feed--through lane users for either exit link.
\section{Equilibrium Properties}\label{sec:eq_prop}

In this section, we will first establish the existence of the equilibrium induced by our model. Then, we will derive the sufficient conditions under which the existing equilibrium is guaranteed to be unique. Therefore, once the sufficient conditions are met, our model could be applied to the prediction of the proportions of bifurcating lane users and feed--through lane users.

\subsection{Equilibrium Existence}

We will first directly give a proposition based on the existence theorem stated and proved in~\cite{braess1979existence}. 
\begin{proposition}\label{prop:wdpexist}
Given a tuple $G = (\mathbf{Q},\mathbf{C})$ configuring a traffic diverge in Figure~\ref{fig:diverge}, for each exit link $i\in I$, if each of the cost functions $J_i^f(\mathbf{x}), J_i^b(\mathbf{x})$ is continuous and monotone in $\mathbf{x}$, there exists at least one Wardrop equilibrium (as described in Definition~\ref{def:wdp}) for $G$. 
\end{proposition}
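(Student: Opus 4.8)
The plan is to recognize the equilibrium conditions of Definition~\ref{def:wdp} as a finite--dimensional variational inequality over the feasible set defined by the flow--conservation constraints~\eqref{eq:flowcons}, and then to invoke the existence theorem of~\cite{braess1979existence}. First I would describe the feasible region. For each $i \in I$ the constraint $q_i = x_i^f + x_i^b$ together with $x_i^f, x_i^b \geq 0$ confines the pair $(x_i^f, x_i^b)$ to the line segment joining $(q_i, 0)$ and $(0, q_i)$; the full feasible set $K$ is the Cartesian product of these segments over $i \in I$. Hence $K$ is nonempty (e.g. $x_i^f = q_i,\, x_i^b = 0$ is feasible), convex (a product of convex segments), and compact (closed and bounded).

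Second, I would translate the complementarity conditions into the standard Wardrop/variational--inequality form. Collecting the costs into the map $J(\mathbf{x}) = (J_i^f(\mathbf{x}), J_i^b(\mathbf{x}): i \in I)$, I would show that $\mathbf{x}^* \in K$ satisfies~\eqref{eq:eq_def} for every $i \neq j \in I$ if and only if it solves the variational inequality $\langle J(\mathbf{x}^*), \mathbf{x} - \mathbf{x}^* \rangle \geq 0$ for all $\mathbf{x} \in K$. The forward direction follows by summing the per--link inequalities; the reverse direction follows by testing the variational inequality against feasible perturbations that shift flow between the two lanes serving a single link $i$, which isolates the sign condition $x_i^f(J_i^f(\mathbf{x}) - J_i^b(\mathbf{x})) \leq 0$ and its companion. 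This equivalence is precisely the statement that a Wardrop equilibrium is a solution of the associated variational inequality.

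Finally, with $K$ nonempty, compact, and convex and with $J$ continuous and monotone by the hypothesis of the proposition, the existence theorem of~\cite{braess1979existence} guarantees a solution $\mathbf{x}^*$ to this variational inequality, which by the equivalence above is an equilibrium in the sense of Definition~\ref{def:wdp}. I expect the only delicate point to be the careful verification of the equivalence between the componentwise complementarity conditions in~\eqref{eq:eq_def} and the aggregate variational inequality; in particular, choosing the right feasible test directions within each segment so that the two inequalities are recovered exactly. Continuity of the stated cost functions is immediate, since~\eqref{eq:J_i^f_ori} and~\eqref{eq:J_i^b_ori} are polynomials in $\mathbf{x}$, so the only substantive hypothesis that must be marshalled when applying the theorem is monotonicity, which is assumed here and will be examined more closely in the uniqueness analysis that follows.
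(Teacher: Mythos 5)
Your proposal is correct and matches the paper's approach: the paper offers no proof of this proposition, stating it as a direct consequence of the existence theorem in the cited reference, and your argument simply makes explicit the standard reduction that underlies that citation --- identifying the feasible set as a nonempty compact convex product of segments, verifying the equivalence between the componentwise complementarity conditions in Definition~\ref{def:wdp} and the aggregate variational inequality, and then invoking the cited existence result. The only remark worth adding is that on a compact convex set continuity of the cost map already suffices for existence of a variational-inequality solution, so monotonicity is not actually needed at this stage; it becomes essential only in the uniqueness analysis.
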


From Equations~\eqref{eq:J_i^f_ori} and~\eqref{eq:J_i^b_ori}, we observe that $J_i^b(\mathbf{x})$ and $J_i^b(\mathbf{x})$ are both continuous and monotone in the sense of non-decreasing in $\mathbf{x}$. Thus, by Proposition~\ref{prop:wdpexist}, we conclude the existence of the equilibrium described in Definition~\ref{def:wdp}.

\subsection{Equilibrium Uniqueness}
As for the uniqueness of the induced equilibrium, we will use a similar method as what we have stated in our previous work~\cite{negar2018bypassing}.
The basic idea is that we will first construct an equivalent Nash equilibrium of our equilibrium model. Then we will prove under certain conditions the uniqueness of the constructed Nash equilibrium. Therefore, the uniqueness of the Wardrop equilibrium induced by our model is concluded by equivalence.

For any tuple $G = (\mathbf{Q},\mathbf{C})$ configuring a traffic diverge in Figure~\ref{fig:diverge}, we construct a two--player auxiliary game $\tilde{G} = \langle I, A, (\tilde{J}_i: i \in I) \rangle$. Here, $I=\{1,2\}$ is the index set of our players. Let $A = A_1 \times A_2$ be the action space, $A_i = [0,q_i]$ be the action set of player $i$, and $\tilde{J}_i$ be the cost associated with each player $i \in I$. Let $\bb{y} = (y_i, i \in I)$ be the vector of actions taken by the players of the game $\tilde{G}$. To further build the correspondence, for each player $i\in I$, we let
\begin{align}\label{eq:corres}
y_i=x_i^b.
\end{align}
Then, for the cost associated with each player $i$, we define
\begin{align}\label{eq:nash_cost}
\tilde{J}_i(\mathbf{y}) := \left( J_i^f(\bb{x}) - J_i^b(\bb{x})\right)^2.
\end{align}
Next, we employ the definition of the constructed Nash equilibrium stated in~\cite{negar2018bypassing}:
\begin{definition} \label{def:nash}
For the auxiliary game $\tilde{G}$, $\bb{y} = (y_i: i\in I)$ is a pure Nash equilibrium if and only if for every $i\neq j \in I$,
\begin{equation}\label{eq:nash}
    \begin{aligned}
    y_i &= B_i(y_{j})\ \\
    &= \text{arg}\text{min}_{y_i \in [0,q_i]} \tilde{J}_i(\bb{y}),
    \end{aligned}
\end{equation}
where $B_i$ is the best response function of player $i$.
\end{definition}

Now that we have constructed a Nash equilibrium, we will use the following lemma to establish the equivalence between the Wardrop equilibrium of $G$ and the Nash equilibrium of $\tilde{G}$.

\begin{lemma}\label{lem:equivalence}
A flow distribution vector $\mathbf{x} = (x_i^f, x_i^b: i \in I)$ is a Wardrop equilibrium for $G$ if and only if $\bb{y}= (x^b_i, i \in I)$ is a pure Nash equilibrium for $\tilde{G}$.
\end{lemma}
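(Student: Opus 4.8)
The plan is to exploit the flow--conservation constraint~\eqref{eq:flowcons} to set up an explicit bijection between feasible flow vectors and action profiles, and then to show that the Wardrop complementarity conditions and the Nash best--response conditions coincide coordinate by coordinate. Since $\mathbf{Q}$ is fixed, every feasible $\mathbf{x}$ is determined by its bifurcating components through $x_i^b = y_i$ and $x_i^f = q_i - y_i$, and conversely every $\mathbf{y} \in A = A_1 \times A_2$ yields a feasible $\mathbf{x}$. Writing $g_i(\mathbf{x}) := J_i^f(\mathbf{x}) - J_i^b(\mathbf{x})$, the two Wardrop inequalities in~\eqref{eq:eq_def} become $x_i^f\, g_i \le 0$ and $x_i^b\, g_i \ge 0$, while the auxiliary cost~\eqref{eq:nash_cost} is simply $\tilde{J}_i(\mathbf{y}) = g_i(\mathbf{x})^2$. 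The goal is therefore to prove, for each fixed opponent action $y_j$, that the pair $(x_i^f, x_i^b)$ satisfies these two inequalities if and only if $y_i = B_i(y_j)$.

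The key step is a direct computation substituting~\eqref{eq:J_i^f_ori} and~\eqref{eq:J_i^b_ori} into $g_i$. Using $x_i^f = q_i - y_i$ and $x_i^b = y_i$ one obtains
\begin{align*}
g_i = \bigl(C_i^f q_i - C^b \mu_i y_j\bigr) - \bigl(C_i^f + C^b \lambda_i + \nu y_j\bigr) y_i,
\end{align*}
so that, with $y_j$ held fixed, $g_i$ is an \emph{affine and strictly decreasing} function of $y_i$: positivity of all cost coefficients together with $y_j \ge 0$ forces the slope $-(C_i^f + C^b \lambda_i + \nu y_j)$ to be strictly negative. Consequently $\tilde{J}_i = g_i^2$ is a strictly convex parabola in $y_i$, and its minimizer over the compact interval $[0,q_i]$ is unique: it is the unconstrained root $y_i^\star$ of $g_i$ clipped to $[0,q_i]$. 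This monotonicity is the crux of the argument and the only place where the specific form of the cost functions enters; establishing it cleanly is the main obstacle, after which the best--response map $B_i$ is completely characterized.

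With $B_i$ in hand, I would finish by matching its three regimes against the Wardrop conditions. If $y_i^\star \in (0,q_i)$, then both $x_i^f$ and $x_i^b$ are strictly positive and $B_i(y_j) = y_i^\star$ forces $g_i = 0$, which is exactly the Wardrop requirement when neither flow vanishes. If $y_i^\star \le 0$, the minimizer is $y_i = 0$, equivalently $x_i^b = 0$ and $x_i^f = q_i$, and the clipping condition is precisely $g_i(0) \le 0$, matching the single active inequality $x_i^f g_i \le 0$; symmetrically, $y_i^\star \ge q_i$ gives $y_i = q_i$, i.e.\ $x_i^f = 0$, with $g_i(q_i) \ge 0$ matching $x_i^b g_i \ge 0$. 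Thus for each $i$ the condition $y_i = B_i(y_j)$ is equivalent to the two Wardrop inequalities, and since both the equilibrium definition and the Nash definition quantify over all $i \in I$ with the coupling carried entirely through $y_j = x_j^b$, the coordinatewise equivalence assembles into the claimed equivalence between the Wardrop equilibrium of $G$ and the Nash equilibrium of $\tilde{G}$.
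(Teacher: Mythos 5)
Your proposal is correct and follows essentially the same route as the paper: fix the opponent's action, reduce to a one--dimensional problem in $y_i = x_i^b$, use the fact that $J_i^f$ is decreasing and $J_i^b$ increasing in $x_i^b$, and split into the same three regimes (the paper's cases (a)--(c), namely $J_i^f > J_i^b$ throughout $[0,q_i]$, an intersection inside $[0,q_i]$, and $J_i^f < J_i^b$ throughout). The only difference is presentational: you characterize the best response algebraically as the root of the affine function $g_i$ clipped to $[0,q_i]$ via strict convexity of $g_i^2$, whereas the paper reads the same trichotomy off sketches of the two cost curves.
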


\begin{proof}
First, using flow constraints~\eqref{eq:flowcons}, let us write $J_i^f(\bb{x})$ and $J_i^b(\bb{x})$ in terms of variables $x_i^b$. For each exit link $i\in I$, we have
\begin{align}
J_i^f(\mathbf{x}) &= C_i^f (q_i-x_i^b) \label{expandJ1},\\
J_i^b(\mathbf{x}) &= C^b \left(\lambda_i x_i^b + \mu_i x_j^b\right) + \nu x_i^b x_j^b. \label{expandJ2}
\end{align}
\noindent Then, for each exit link $i\in I$, we calculate the derivatives of the costs with respect to $x_i^b$:
\begin{align}
\frac{\partial J^f_i}{\partial x_i^b} &= -C_i^f,\label{eq:djf}\\
\frac{\partial J^b_i}{\partial x_i^b} &= C^b \lambda_i+ \nu x_j^b.\label{eq:djb}
\end{align}
Notice that in Equation~\eqref{eq:djf} and~\eqref{eq:djb}, given all coefficients are positive, for each exit link $i\in I$, $\frac{\partial J^f_i}{\partial x_i^b}$ is always negative and equals to a constant, i.e., $J_i^f(\mathbf{x})$ is a linear function of $x_i^b$ with a negative slope. For each exit link $i\neq j \in I$, due to the flow constraints~\eqref{eq:flowcons}, $x_j^b$ is always nonnegative, thus $\frac{\partial J^b_i}{\partial x_i^b}$ is always positive and increases as $x_j^b$ increases. 

\begin{figure}
    \centering
    \begin{subfigure}[b]{0.3\textwidth}
       \centering \includegraphics[width=\textwidth]{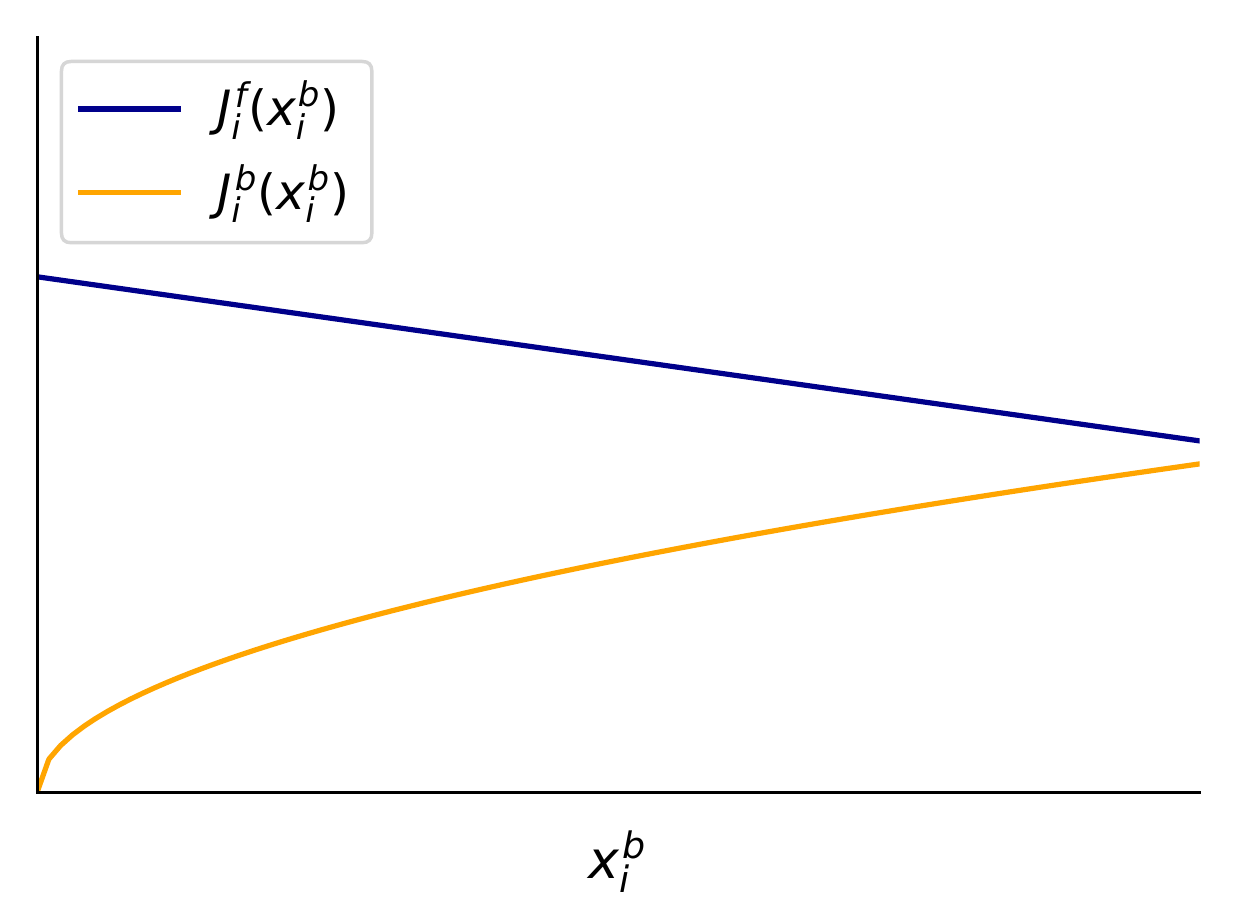}
        \caption{$J_i^f(x_i^b)>J_i^b(x_i^b)$}
    \end{subfigure}
    \begin{subfigure}[b]{0.3\textwidth}
       \centering \includegraphics[width=\textwidth]{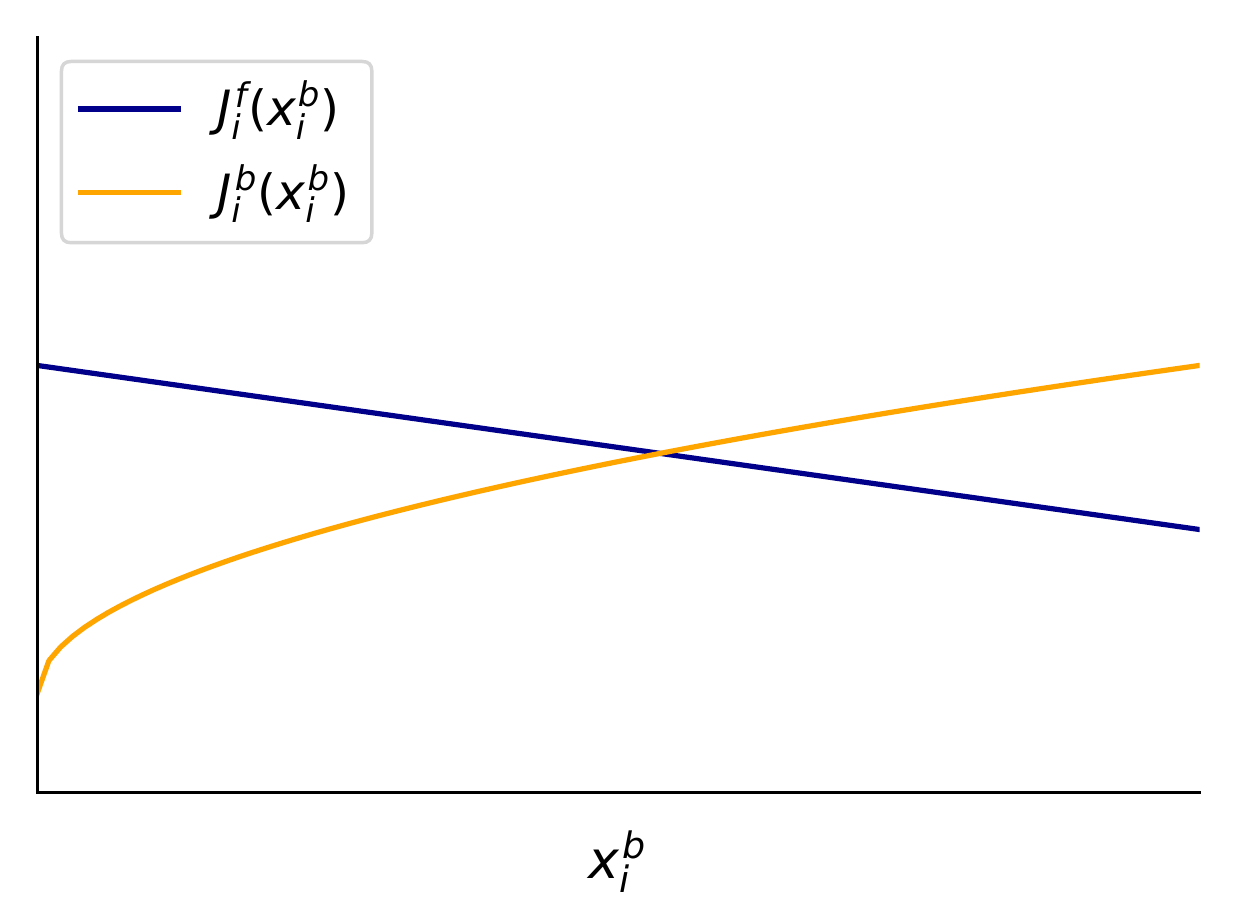}
        \caption{$J_i^f(x_i^b)$ and $J_i^b(x_i^b)$ intersect}
    \end{subfigure}
    \begin{subfigure}[b]{0.3\textwidth}
       \centering \includegraphics[width=\textwidth]{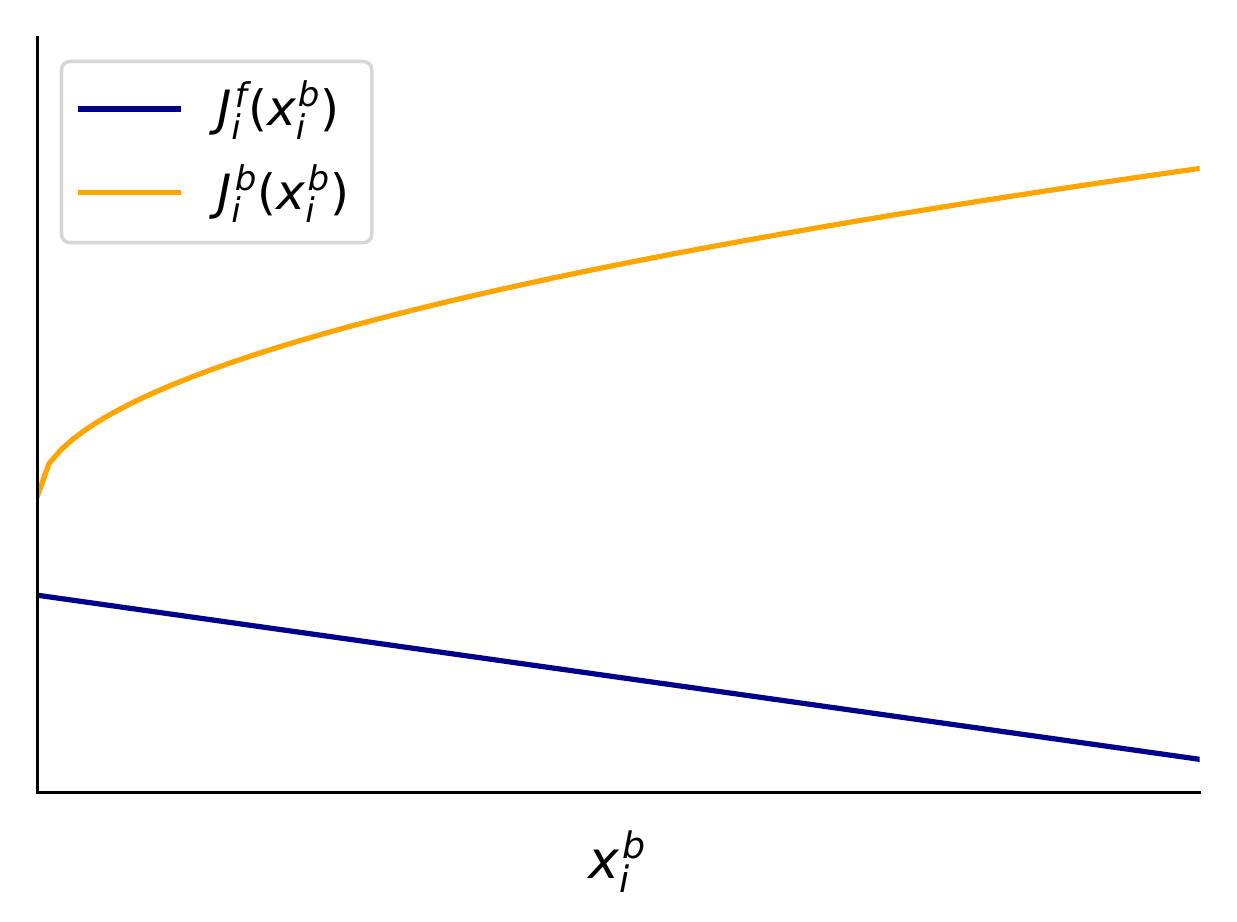}
        \caption{$J_i^f(x_i^b)<J_i^b(x_i^b)$}
    \end{subfigure}  
    \caption{Three possible sketches of $J_i^f(x_i^b)$ and $J_i^b(x_i^b)$ in the region of $x_i^b\in [0,q_i]$.}
    \label{fig:JJ}
\end{figure}

We now will show sketches of $J_i^f(\mathbf{x})$ and $J_i^b(\mathbf{x})$ in the region of $x_i^b\in [0,q_i]$. There are 3 possible cases of the sketch and we draw each possibility in Figure~\ref{fig:JJ}. We then complete the proof of equivalence case by case.

\begin{itemize}

\item Case (a): In this scenario, for every $x_i^b\in [0,q_i]$, we have $J_i^f(x_i^b)> J_i^b(x_i^b)$. To minimize the cost associated with player $i$ in Equation~\eqref{eq:nash_cost}, we have $y_i=q_i$ at the constructed Nash equilibrium. Since we let $y_i=x_i^b$, we have $x_i^b=q_i$. Then due to flow constraints~\eqref{eq:flowcons}, we have $x_i^b=q_i,x_i^f=0$, where $J_i^f(x_i^b)> J_i^b(x_i^b)$. This meets the conditions in Definition~\ref{def:wdp}, thus the constructed Nash equilibrium leads to the Wardrop equilibrium. At a Wardrop equilibrium where $J_i^f(x_i^b)> J_i^b(x_i^b)$, we can conclude that $x_i^f=0,x_i^b=q_i$. This leads to $y_i=q_i$. From plot (a), we can see that, the cost associated with player $i$ is minimized. Therefore, the Wardrop equilibrium is also a constructed Nash equilibrium. In this case, the Wardrop equilibrium is equivalent to the constructed Nash equilibrium.

\item Case (b): In this case, $J_i^f(x_i^b)$ and $J_i^b(x_i^b)$ have an intersection. Let us denote the $x_i^b$ at the intersection as ${\bar{x}^b}_i$. Then we have $J_i^f({\bar{x}^b}_i) = J_i^b({\bar{x}^b}_i)$. At the Nash equilibrium, we let $y_i=\bar{x}^b_i$, therefore the minimum possible cost associated with player $i$ is reached, which is 0. At this time, Wardrop conditions in Definition~\ref{def:wdp} are tight and met. Therefore, the constructed Nash equilibrium is also the Wardrop equilibrium. Reversely, when $J_i^f(x^b_i) = J_i^b(x^b_i)$ at the Wardrop equilibrium, from plot (b), $x^b_i$ could only equal to ${\bar{x}^b}_i$. This results in a zero cost, which is the minimum possible value of the cost. Therefore, a Nash equilibrium is reached. The equivalence of Wardrop equilibrium and its constructed Nash equilibrium is established for this case.

\item Case (c): This case is similar to case (a). For every $x_i^b\in [0,q_i]$, we have $J_i^f(x_i^b)< J_i^b(x_i^b)$. To minimize the cost associated with player $i$ in Equation~\eqref{eq:nash_cost}, we have $y_i=0$ at the constructed Nash equilibrium. Since we let $y_i=x_i^b$, we have $x_i^b=0$. Then due to flow constraints~\eqref{eq:flowcons}, we have $x_i^b=0,x_i^f=q_i$, where $J_i^f(x_i^b)< J_i^b(x_i^b)$. This meets the conditions in Definition~\ref{def:wdp}, thus the constructed Nash equilibrium leads to the Wardrop equilibrium. At a Wardrop equilibrium where $J_i^f(x_i^b)< J_i^b(x_i^b)$, we can conclude that $x_i^f=q_i,x_i^b=0$. This leads to $y_i=0$. From plot (c), we can see that the cost associated with player $i$ is minimized. Therefore, the Wardrop equilibrium is also a constructed Nash equilibrium. In this case, the Wardrop equilibrium is equivalent to the constructed Nash equilibrium.
\end{itemize}
\end{proof}
Now in all cases, we have proved the equivalence between the Wardrop equilibrium and the constructed Nash equilibrium.
Then, we use the following lemma to establish the uniqueness of the constructed Nash equilibrium.

\begin{lemma}\label{lem:uniq}
For an auxiliary game $\tilde{G}$, the Nash equilibrium flow vector $\bb{y}$ in Definition~\ref{def:nash} is unique if for each player $i\in I$:
\begin{align}
(\lambda_i - \mu_i) C^b \geq \nu-C_i^f.
\label{eq:uniqcons}
\end{align}
\end{lemma}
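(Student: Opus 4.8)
The plan is to compute each player's best--response function $B_i$ in closed form, to show that condition~\eqref{eq:uniqcons} makes each $B_i$ a contraction (Lipschitz constant at most one, and strictly below one in the non--degenerate case), and then to conclude that the composed map $B_1\circ B_2$ has a unique fixed point, which by Definition~\ref{def:nash} is exactly the Nash equilibrium.

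First I would rewrite the cost~\eqref{eq:nash_cost} using~\eqref{expandJ1}--\eqref{expandJ2}, setting $D_i(\mathbf{y}) := J_i^f - J_i^b = C_i^f(q_i - y_i) - C^b(\lambda_i y_i + \mu_i y_j) - \nu y_i y_j$, so that $\tilde{J}_i = D_i^2$. For fixed $y_j$, $D_i$ is affine in $y_i$ with slope $-(C_i^f + C^b\lambda_i + \nu y_j)$, which is strictly negative since all coefficients are positive and $y_j\ge 0$; moreover, evaluating at $y_i=q_i$ gives $D_i = -C^b\lambda_i q_i - (C^b\mu_i + \nu q_i)y_j \le 0$. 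Hence $D_i$ is strictly decreasing and nonpositive at the right endpoint, so minimizing $D_i^2$ over $[0,q_i]$ yields the unique minimizer at the zero of $D_i$ when it lies in $[0,q_i]$ and at the left endpoint otherwise. Writing $N_i := C_i^f q_i - C^b\mu_i y_j$ and $M_i := C_i^f + C^b\lambda_i + \nu y_j$, this is
\begin{equation*}
B_i(y_j) = \max\left\{0,\ \frac{N_i}{M_i}\right\}.
\end{equation*}

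Next I would bound the slope of $B_i$. On the interior branch a direct differentiation (in which the $\nu y_j$ terms cancel) gives
\begin{equation*}
\frac{\partial B_i}{\partial y_j} = -\,\frac{C^b\mu_i M_i + \nu N_i}{M_i^2},
\end{equation*}
while the flat branch has slope $0$. Since at the interior best response $N_i = B_i(y_j)\,M_i \le q_i M_i$ with $q_i \le 1$, and since~\eqref{eq:uniqcons} is precisely $C_i^f + C^b\lambda_i - C^b\mu_i \ge \nu$, I can estimate
\begin{equation*}
C^b\mu_i M_i + \nu N_i \ \le\ (C^b\mu_i + \nu q_i)M_i \ \le\ (C^b\mu_i + \nu)M_i \ \le\ (C_i^f + C^b\lambda_i)M_i \ \le\ M_i^2,
\end{equation*}
where the last step uses $C_i^f + C^b\lambda_i \le M_i$. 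Thus $\left|\partial B_i/\partial y_j\right| \le 1$, with strict inequality whenever $q_i < 1$ (the inequality $\nu q_i \le \nu$ becoming strict). Consequently $B_i$ is Lipschitz on its domain with constant $L_i \le 1$, and $L_i < 1$ in the non--degenerate case $q_i < 1$.

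Finally, the equilibrium conditions~\eqref{eq:nash} read $y_1 = B_1(y_2)$ and $y_2 = B_2(y_1)$, so every Nash equilibrium is a fixed point of $B_1\circ B_2$ on the complete interval $[0,q_1]$. When $0<q_i<1$ for both players this composed map is a contraction with constant $L_1 L_2 < 1$, so it has a unique fixed point; equivalently, given two equilibria $\mathbf{y},\mathbf{y}'$, the Lipschitz bounds yield $|y_1 - y_1'| \le L_1|y_2 - y_2'| \le L_1 L_2 |y_1 - y_1'|$, forcing $\mathbf{y}=\mathbf{y}'$. The degenerate cases $q_i\in\{0,1\}$ collapse one action set to a single point and are checked directly. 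I expect the main obstacle to be the slope estimate: showing that~\eqref{eq:uniqcons} together with $q_i \le 1$ pushes the best--response Lipschitz constant below one. The two ingredients that make this work are the identity $N_i = B_i(y_j)M_i \le q_i M_i$ on the interior branch and the cancellation of $\nu y_j$ in $\partial B_i/\partial y_j$, which turns a seemingly $y_j$--dependent bound into a clean algebraic inequality equivalent to the stated condition.
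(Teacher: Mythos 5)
Your proposal is correct and follows essentially the same route as the paper: both arguments reduce uniqueness to the fixed point of the composed best response $B_i\circ B_j$ and show that condition~\eqref{eq:uniqcons} forces the slope of each $B_i$ to lie in $[-1,0]$ (your explicit formula $-(C^b\mu_i M_i+\nu N_i)/M_i^2$ is exactly the paper's implicitly differentiated $-(C^b\mu_i+\nu \bar{x}_i^b)/(C_i^f+C^b\lambda_i+\nu x_j^b)$, bounded via $\bar{x}_i^b\le q_i\le 1$). Your closed-form best response, the explicit treatment of the kink at $N_i=0$, and the attention to strictness of the Lipschitz bound and the degenerate cases $q_i\in\{0,1\}$ are tidier than the paper's case-based implicit-differentiation argument, but the underlying idea is the same.
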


\begin{proof}
At a Nash equilibrium, for each player $i\neq j\in I$, we have
\begin{align}\label{B}
    y_i &= B_i(y_{j}).
\end{align}
For each player $i\neq j\in I$, we can rewrite~\eqref{B} as
\begin{align}\label{eq:eq_cond_casc}
y_i &= B_i(B_{j}(y_{i})).
\end{align}
\noindent Equation~\eqref{eq:eq_cond_casc} indicates that $\bb{y}$ is an equilibrium if and only if for every $i\neq j\in I$, $y_i$ is a fixed point for function $g(z)=B_i\left(B_{j}(z)\right)$. Thereby, the number of the fixed points of function $g(z)=B_i\left(B_{j}(z)\right)$ equals the number of equilibria. To guarantee the uniqueness of the equilibrium, we have to guarantee the uniqueness of the fixed point of function $g(z)=B_i\left(B_{j}(z)\right)$.
Notice that a fixed point $z^*$ of function $g(z)$ must satisfy
\begin{align}
    g(z^*)=z^*.
\end{align}
\noindent Therefore, the fixed point $z^*$ can be found by intersecting the identity function $h(z)=z$ and function $g(z)$. Thus, the basic idea of the following proof is to show that under~\eqref{eq:uniqcons}, for every player $i \neq j \in I$, the slope of function $g(z)=B_i\left(B_{j}(z)\right)$ is always non-negative and smaller than 1. Therefore, with $y_i\geq 0$, function $g(z)=B_i\left(B_{j}(z)\right)$ can intersect the identity function at most once, which will establish the uniqueness of the fixed point of function $g(z)=B_i\left(B_{j}(z)\right)$. Then we can conclude the uniqueness of the constructed Nash equilibrium.

First, for each player $i\neq j \in I$, we explore $\frac{d B_i(y_j)}{d y_{j}}$. Back to the three cases when we prove Lemma~\ref{lem:equivalence}, for case (a) and case (c), $y_i=B_i(y_{j})$ is always equal to $q_i$ or 0 no matter how $y_{j}$ changes, therefore, $\frac{d B_i}{d y_{j}}=0$. Then the slope of function $g(z)=B_i\left(B_{j}(z)\right)$ is always 0. Thus we only need to explore $\frac{d B_i(y_j)}{d y_{j}}$ in case (c), where $J_i^f(x_i^b)$ and $J_i^b(x_i^b)$ intersect in the region of $x_i^b\in[0,q_i]$. Remember that, we let $y_i=x_i^b$, for simplicity, we use $x_i^b$ instead of $y_i$ in the following proof. Using the same notations as the proof of Lemma~\ref{lem:equivalence}, in case (c), with a given $x_j^b$, let $J_i^f(x_i^b)$ and $J_i^b(x_i^b)$ intersect at $\bar{x}_i^b(x_j^b) \in [0,q_i]$. For $\bar{x}_i^b(x_j^b)$, we must have
\begin{align}
J_i^f(\bar{x}_i^b, {x}_j^b) - J_i^b(\bar{x}_i^b, {x}_j^b) =0.
\end{align}
Using implicit differentiation of $J_i^f(\bar{x}_i^b, {x}_j^b) - J_i^b(\bar{x}_i^b, {x}_j^b)$ with respect to $x_j^b$, we have
\begin{equation}\label{eq:impl_diff}
\begin{aligned}
& \frac{\partial}{\partial x_i^b}\left( J_i^f(\bar{x}_i^b,{x}_j^b) - J_i^b(\bar{x}_i^b,{x}_j^b) \right) \frac{ d\bar{x}_i^b(x_j^b)}{d x^b_j} + \\
&\qquad \qquad \frac{\partial}{\partial x_j^b}\left( J_i^f(\bar{x}_i^b,{x}_j^b) - J_i^b(\bar{x}_i^b,{x}_j^b) \right) = 0.
\end{aligned}
\end{equation}
\noindent Using Equations~\eqref{expandJ1} and~\eqref{expandJ2}, we have
\begin{align}
\frac{\partial J_i^f}{\partial x_j^b} &= 0,\label{eq:dj2f}\\
\frac{\partial J_i^b}{\partial x_j^b} &= C^b \mu_i + \nu x_i^b.\label{eq:dj2b}
\end{align}
Since $x_i^b\geq 0$, we can conclude that for every $i\neq j\in I$, $\frac{\partial J_i^f}{\partial x_j^b}$ is always 0 and $\frac{\partial J_i^b}{\partial x_j^b}$ is always positive. Therefore, we have
\begin{align}
\frac{\partial}{\partial x_j^b}\left( J_i^f(\bar{x}_i^b,{x}_j^b) - J_i^b(\bar{x}_i^b,{x}_j^b) \right)\leq 0.
\end{align}
From Equations~\eqref{eq:djf} and~\eqref{eq:djb}
, we have
\begin{align}
\frac{\partial}{\partial x_i^b}\left( J_i^f(\bar{x}_i^b,{x}_j^b) - J_i^b(\bar{x}_i^b,{x}_j^b) \right)\leq 0.
\end{align}
Thus, using Equation~\eqref{eq:impl_diff}, we conclude that
\begin{align}
\frac{ d\bar{x}_i^b(x_j^b)}{d x^b_j}\leq 0.
\end{align}
Intuitively, from plot (b), when $x_j^b$ increases, $J_i^f(\bar{x}_i^b,{x}_j^b)$ stays the same, whereas $J_i^b(\bar{x}_i^b,{x}_j^b)$ increases. The intersection climbs leftwards, therefore, $\bar{x}_i^b$ decreases. 

To guarantee, for every player $i \neq j \in I$, that the slope of $g(z)=B_i\left(B_{j}(z)\right)$ is always non-negative and smaller than 1, based on the chain rule, we have to guarantee that the slope of $B_{i}(y_j)$ is always non-positive and bigger than $-1$, i.e., $-1\leq \frac{ d\bar{x}_i^b(x_j^b)}{d x^b_j}\leq 0$. Now we plug Equations~\eqref{eq:djf},~\eqref{eq:djb},~\eqref{eq:dj2f} and~\eqref{eq:dj2b}
into Equation~\eqref{eq:impl_diff}, we have
\begin{equation}\label{eq:diff}
\begin{aligned}
\left(-C_i^f- C^b \lambda_i-\nu x_j^b\right) 
\frac{ d\bar{x}_i^b(x_j^b)}{d x^b_j} &+ \\
\left(-C^b \mu_i - \nu x_i^b\right) &= 0.
\end{aligned}
\end{equation}
\noindent For each player $i\neq j\in I$, to ensure $-1\leq \frac{ d\bar{x}_i^b(x_j^b)}{d x^b_j}\leq 0$, we need to guarantee
\begin{equation}
\begin{aligned}
C_i^f+ C^b \lambda_i+\nu x_j^b \geq
C^b \mu_i + \nu x_i^b.
\end{aligned}
\end{equation}
Let $M(x_i^b,x_j^b)=C_i^f+ C^b \lambda_i+\nu x_j^b -
C^b \mu_i - \nu x_i^b $, we need to guarantee that $\text{min}\ M(x_i^b,x_j^b)\geq 0$. 
Since $M(x_i^b,x_j^b)$ is negatively linear in $x_i^b$ and positively linear in $x_j^b$, the minimum possible value of $M(x_i^b,x_j^b)$ must be greater than the value of the extreme point, $M(1,0)=C_i^f+ C^b \lambda_i -C^b \mu_i -\nu $. Thus, we just need to guarantee that $M(1,0)\geq0$, which is as we stated in Lemma~\ref{lem:uniq}.

Thus, for every player $i \neq j \in I$, under condition~\eqref{eq:uniqcons}, the slope of $g(z)=B_i\left(B_{j}(z)\right)$ is always nonnegative and smaller than 1. Therefore, with $y_i\geq 0$, $g(z)=B_i\left(B_{j}(z)\right)$ can intersect the identity line at most once, which establishes the uniqueness of the fixed point of function $g(z)=B_i\left(B_{j}(z)\right)$. Thus, we can conclude the uniqueness of the constructed Nash equilibrium.
\end{proof}

We then give the following theorem to establish the uniqueness of the Wardrop equilibrium as described in Definition~\ref{def:wdp}.
\begin{theorem}\label{thm:uniq}
For a game $G = (\bb{Q}, \bb{C})$, the equilibrium flow vector $\mathbf{x}$ in Definition~\ref{def:wdp} is unique if for each exit link $i\in I$:
\begin{align}\label{eq:uni}
(\lambda_i - \mu_i) C^b \geq \nu-C_i^f.
\end{align}
\end{theorem}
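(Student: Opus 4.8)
The plan is to obtain this theorem as an immediate consequence of the two lemmas just established, so that its proof is an assembly step rather than a fresh argument. First I would observe that the hypothesis~\eqref{eq:uni} of the theorem is identical to the hypothesis~\eqref{eq:uniqcons} of Lemma~\ref{lem:uniq}. Consequently, that lemma applies directly and guarantees that the auxiliary game $\tilde{G}$ admits a \emph{unique} pure Nash equilibrium $\bb{y}$.

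Next I would transfer this uniqueness back to the Wardrop setting using Lemma~\ref{lem:equivalence}. Concretely, suppose $\mathbf{x}_1$ and $\mathbf{x}_2$ are both Wardrop equilibria for $G$. By the equivalence lemma, their bifurcating components $\bb{y}_1 = (x_{1,i}^b : i \in I)$ and $\bb{y}_2 = (x_{2,i}^b : i \in I)$ are both pure Nash equilibria of $\tilde{G}$. Since that equilibrium is unique under~\eqref{eq:uni}, we get $\bb{y}_1 = \bb{y}_2$, i.e. $x_{1,i}^b = x_{2,i}^b$ for every $i \in I$.

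The one point that requires any care is that the equivalence only controls the bifurcating components $x_i^b$, whereas the equilibrium vector $\mathbf{x}$ also carries the feed--through components $x_i^f$. Here I would invoke flow conservation~\eqref{eq:flowcons}: since $x_i^f = q_i - x_i^b$ is fully determined by $x_i^b$ once $\mathbf{Q}$ is fixed, equality of the bifurcating components forces $x_{1,i}^f = q_i - x_{1,i}^b = q_i - x_{2,i}^b = x_{2,i}^f$ for every $i \in I$. Hence $\mathbf{x}_1 = \mathbf{x}_2$, and the Wardrop equilibrium is unique.

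I do not anticipate a genuine obstacle in this assembly; the substantive work has already been spent on the slope estimate $-1 \le d\bar{x}_i^b / d x_j^b \le 0$ inside Lemma~\ref{lem:uniq}. The only thing worth double--checking is that the correspondence $\mathbf{x} \mapsto (x_i^b : i \in I)$ is genuinely injective on feasible flow vectors, which is exactly what flow conservation supplies; this is what upgrades uniqueness of the Nash equilibrium to uniqueness of the full flow distribution $\mathbf{x}$ rather than merely of its bifurcating part.
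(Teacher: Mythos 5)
Your proof is correct and follows essentially the same route as the paper: apply Lemma~\ref{lem:uniq} to get uniqueness of the Nash equilibrium of $\tilde{G}$, then transfer it to the Wardrop equilibrium via Lemma~\ref{lem:equivalence}. Your explicit remark that flow conservation~\eqref{eq:flowcons} determines $x_i^f = q_i - x_i^b$, so that uniqueness of the bifurcating components upgrades to uniqueness of the full vector $\mathbf{x}$, is a small detail the paper leaves implicit, and it is worth stating.
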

\begin{proof}
From Lemma~\ref{lem:uniq}, we know that the constructed Nash equilibrium in Definition~\ref{def:nash} is unique under condition~\eqref{eq:uni}. By Lemma~\ref{lem:equivalence}, we conclude that the constructed Nash equilibrium is equivalent to the Wardrop equilibrium in the sense of Definition~\ref{def:wdp}. Thus, we can conclude that the Wardrop equilibrium as described in Definition~\ref{def:wdp} is unique.
\end{proof}

Notice that Theorem~\ref{thm:uniq} only gives a sufficient but not necessary condition of the uniqueness of the equilibrium. This implies if condition~\eqref{eq:uni} is met, we can guarantee the uniqueness of the equilibrium; however, if condition~\eqref{eq:uni} is not met, it is also possible that the Wardrop equilibrium is unique.
\section{Simulation Studies} \label{sec:model_val}
Now that we have characterized the existence and uniqueness of the equilibrium induced by our model, we are going to test the performance of our model in terms of how it accurately describes steady state vehicular flow data generated by a micro-simulation flow model. 

In this paper, we generate vehicular steady state flow data for model calibration and validation using the traffic microscopic simulation software SUMO~\cite{SUMO2012}, which is commonly utilized by the transportation community. In the simulations, to ensure the reliability of the generated data, we set the SUMO car following model to be the default Krauss model, which is also known as the stochastic version of the Gipps' model. To be specific, the Krauss model supports stochastic driving behavior by setting an imperfection parameter \textit{sigma}. The imperfection parameter \textit{sigma}, which ranges from 0 to 1, represents the degree of randomness of vehicles' behavior. When \textit{sigma} is set to nonzero, drivers will randomly vary their speed. In the simulations, we set \textit{sigma} to a default value of 0.5, in order to realistically mimic vehicle randomness. An enlarged view of the established diverge in SUMO is shown in Figure~\ref{fig:sim_en}. 
To ensure that the data we collect truly reveals the equilibrium state, we set the entry link to be sufficiently long  and only recorded the proportions of different classes of vehicles ($\mathbf{x} := (x_i^f, x_i^b: i \in I)$), downstream of the diverge when the simulation has run for a sufficiently long time to reach the steady state.

\begin{figure}
    \centering
    \includegraphics[width=0.4\textwidth,height=6cm]{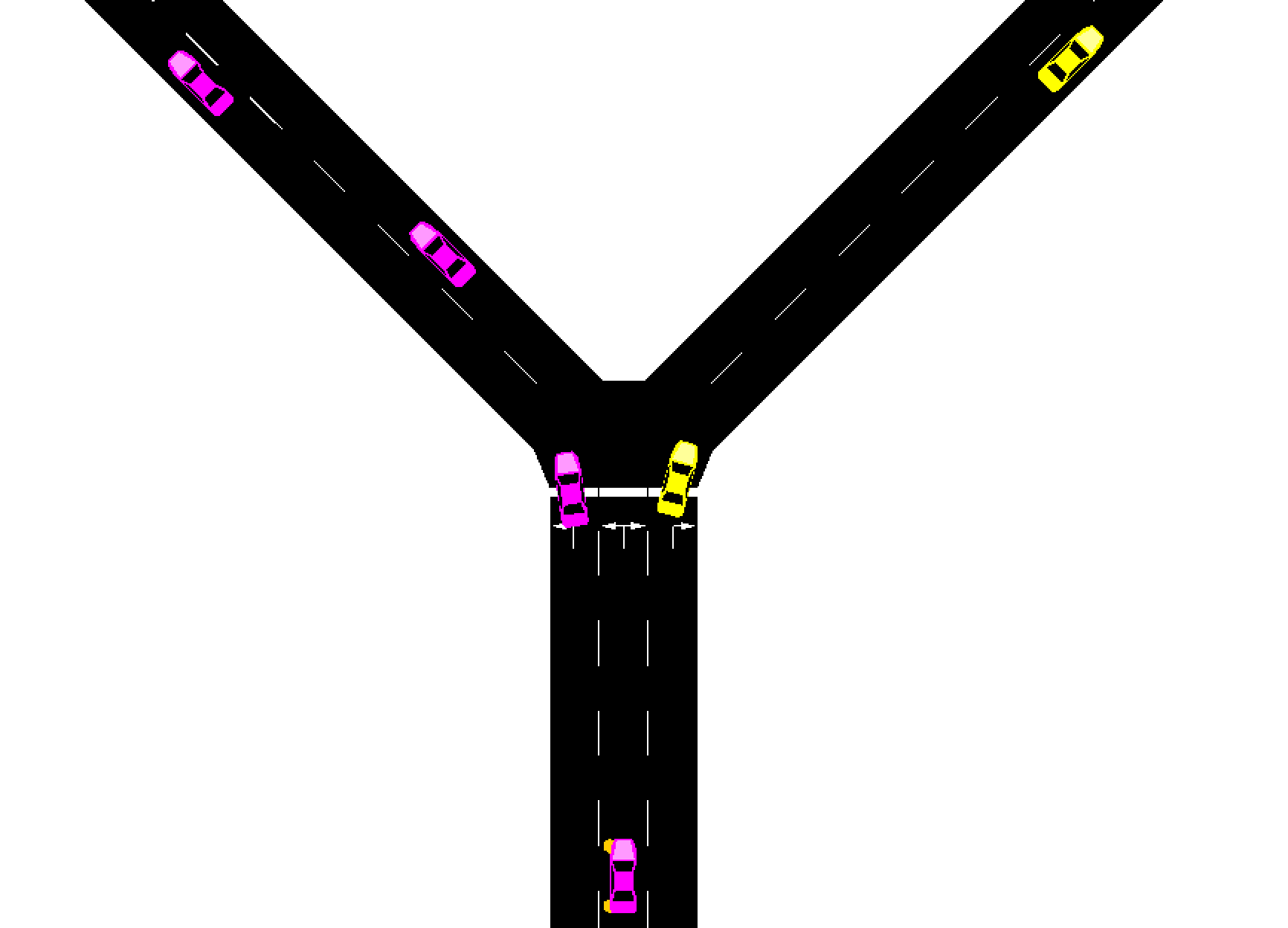}
    \caption{An enlarged view of the traffic diverge with a bifurcating lane in SUMO.}\label{fig:sim_en}
\end{figure}


\subsection{Model Calibration}

Notice that the cost coefficient vector $\mathbf{C} = (C_i^f,C^b,\lambda_i,\mu_i,\nu: i \in I)$ is related to the intrinsic features of a diverge. Thus, it is necessary to first calibrate the cost coefficient vector $\bf{C}$ in the model for a given diverge before we use the model for prediction.
For the diverge shown in Figure~\ref{fig:sim_en}, let the exit link index set be $I_{\text{exit}} = \{1,2 \}$. We define $K$ to be the total number of data points that we need for calibration. For each data point, we run the simulation once, until it reaches an equilibrium state. We pick an appropriate total demand, $D$, of vehicles entering the diverge and fix it for all $K$ simulations. For every simulation, we randomize the demand configuration for either exit link. In the $k^{\text{th}}$ simulation to generate the $k^{\text{th}}$ data point, we let $d_1^k$ be the demand of vehicles targeting exit link 1 and similarly, let $d_2^k$ be the demand of vehicles targeting exit link 2. For each $1\leq k\leq K$, we should have $d_1^k+d_2^k=D$. Now, for each $1\leq k\leq K$, we define $q_1^k:=\frac{q_1^k}{D}$ and $q_2^k:=\frac{q_2^k}{D}$ to be the normalized demand for exit link 1 and exit link 2 and we should have $q_1^k+q_2^k=1$. Then we let $\mathbf{Q}^k:=\{q_1^k, q_2^k \}$ to be the $k^{\text{th}}$ flow configuration vector for the $k^{\text{th}}$ simulation. After the simulation has reached the equilibrium, for each exit link $i\in I_{\text{exit}}$, we record the proportions of the feed--through lane users $(x_i^f)^k$ and the bifurcating lane users $(x_i^b)^k$. For each $1\leq k\leq K$, we collect the proportions in the flow distribution vector $\mathbf{x}^k=\{(x_i^f)^k,(x_i^b)^k:i\in I_{\text{exit}}\}$. Now, for each $1\leq k\leq K$, we define a tuple $(\mathbf{Q}^k,x^k)$. Then using these $K$ tuples, we employ the method developed in~\cite{negar2018bypassing} for calibration, which we briefly describe below.

We want to find a cost coefficient vector $\mathbf{C} = (C_i^f,C^b,\lambda_i,\mu_i,\nu: i \in I_{\text{exit}})$ that can enable as many as possible of the $K$ data points to meet the conditions in Definition~\ref{def:wdp}. To deal with the variational inequalities when encoding the conditions in Definition~\ref{def:wdp} for each data point, for every $1\leq k\leq K$ and $i \in I_{\text{exit}}$, we define binary variables $(e_i^f)^k$ and $(e_i^b)^k$ as
\begin{linenomath*}
\begin{subequations}\label{eq:impl_cons}
\begin{gather}
\begin{align}
(x_i^f)^k (J_i^f(\mathbf{x}^k) - J_i^b(\mathbf{x}^k)) \leq 0 &\Longleftrightarrow (e_i^f)^k = 0, \\
(x_i^f)^k (J_i^f(\mathbf{x}^k) - J_i^b(\mathbf{x}^k)) > 0 &\Longleftrightarrow (e_i^f)^k = 1, \\
(x_i^b)^k (J_i^b(\mathbf{x}^k) - J_i^f(\mathbf{x}^k)) \leq 0 &\Longleftrightarrow (e_i^b)^k = 0, \\
(x_i^b)^k (J_i^b(\mathbf{x}^k) - J_i^f(\mathbf{x}^k)) > 0 &\Longleftrightarrow (e_i^b)^k = 1.
\end{align} 
\end{gather}
\end{subequations}
\end{linenomath*}

This way, we use the binary variables $(e_i^f)^k$ and $(e_i^b)^k$ to indicate for $(x_i^f)^k$ and $(x_i^b)^k$ in each data point whether the conditions in Definition~\ref{def:wdp} are violated. To optimize for $\mathbf{C}$, we will minimize the sum of binary variables $(e_i^f)^k$ and $(e_i^b)^k$ for all $i\in I_{\text{exit}}$ and $1\leq k\leq K$. Moreover, to solve the optimization problem, we transfer the constraints in Equations~\eqref{eq:impl_cons} as in~\cite{raman2014model}. We set $T$ as a large positive number, and we set $\epsilon$ as a small positive number close to zero. Then the constraints in Equations~\eqref{eq:impl_cons} can be transferred as below:

\begin{linenomath*}
\begin{subequations}\label{eq:simp_impl}
\begin{gather}
\begin{align}
(x_i^f)^k (J_i^f(\mathbf{x}^k) - J_i^b(\mathbf{x}^k)) &\leq T (e_i^f)^k - \epsilon,  \\
-(x_i^f)^k (J_i^f(\mathbf{x}^k) - J_i^b(\mathbf{x}^k)) &\leq T (1-(e_i^f)^k) - \epsilon,  \\
(x_i^b)^k (J_i^b(\mathbf{x}^k) - J_i^f(\mathbf{x}^k)) &\leq T (e_i^b)^k - \epsilon, \\
-(x_i^b)^k (J_i^b(\mathbf{x}^k) - J_i^f(\mathbf{x}^k)) &\leq T (1-(e_i^b)^k) - \epsilon.
\end{align} 
\end{gather}
\end{subequations}
\end{linenomath*}

Then we can find the calibrated cost coefficient vector $\mathbf{C}$ by solving the mixed--integer linear program problem below:

\begin{linenomath*}\begin{equation}\label{eq:cal_opt_final}
\begin{aligned}
& \underset{\bf{C}}{\text{minimize}}
& & \sum_{1\leq k\leq K} \sum_{i \in I_{\text{exit}}} \left((e_i^f)^k + (e_i^b)^k \right) \\
& \text{subject to}
& & \text{Equations}~\eqref{eq:simp_impl},\\
&&& \bb{C}_r \geq 1.
\end{aligned}
\end{equation}\end{linenomath*}



In our simulation, the capacity per lane is 1100 \textit{vph} (vehicles per hour). We pick the total demand $D$ as 3000 \textit{vph}. We vary the demand for exit link 1 from 1150 \textit{vph} to 1850 \textit{vph}. For each demand configuration, $x_1^f$, $x_1^b$, $x_2^f$, and $x_2^b$ are recorded. In our simulation, we set the intrinsic features of every entry lane to be uniform and the geometry of two exit links to be symmetric. Thus we add the equality constraints below in our calibration process to reflect the symmetry:

\begin{equation}\label{eq:cal_sym}
    \begin{aligned}
    C_1^f &= C_2^f=C^b,\\
    \lambda_{1} &= \lambda_{2},\\
    \mu_{1} &= \mu_{2}.\\
    \end{aligned}
\end{equation}
After performing calibration process described above, we obtained the following cost coefficient vector $\mathbf{C}$:
\begin{equation}\label{eq:cal_val}
    \begin{aligned}
    C_1^f &= C_2^f=C^b=1.45,\\
    \lambda_{1} &= \lambda_{2}=0.87,\\
    \mu_{1} &= \mu_{2}=0.69,\\
    \nu &=1.\\
    \end{aligned}
\end{equation}

Note that the obtained values of $\bb{C}$ satisfy~\eqref{eq:uni}, thus, with this $\bb{C}$, we can predict the unique equilibrium $\mathbf{x}$ for each flow configuration $\mathbf{Q}$. Also, since for $i\in I_{\text{exit}}$, we have $\lambda_i< 1$ and $\mu_i< 1$, the capacity increase effect on bifurcating lane users is validated.

\subsection{Model Validation}

\begin{figure}
    \centering
    \begin{subfigure}[b]{0.45\textwidth}
       \centering \includegraphics[width=\textwidth]{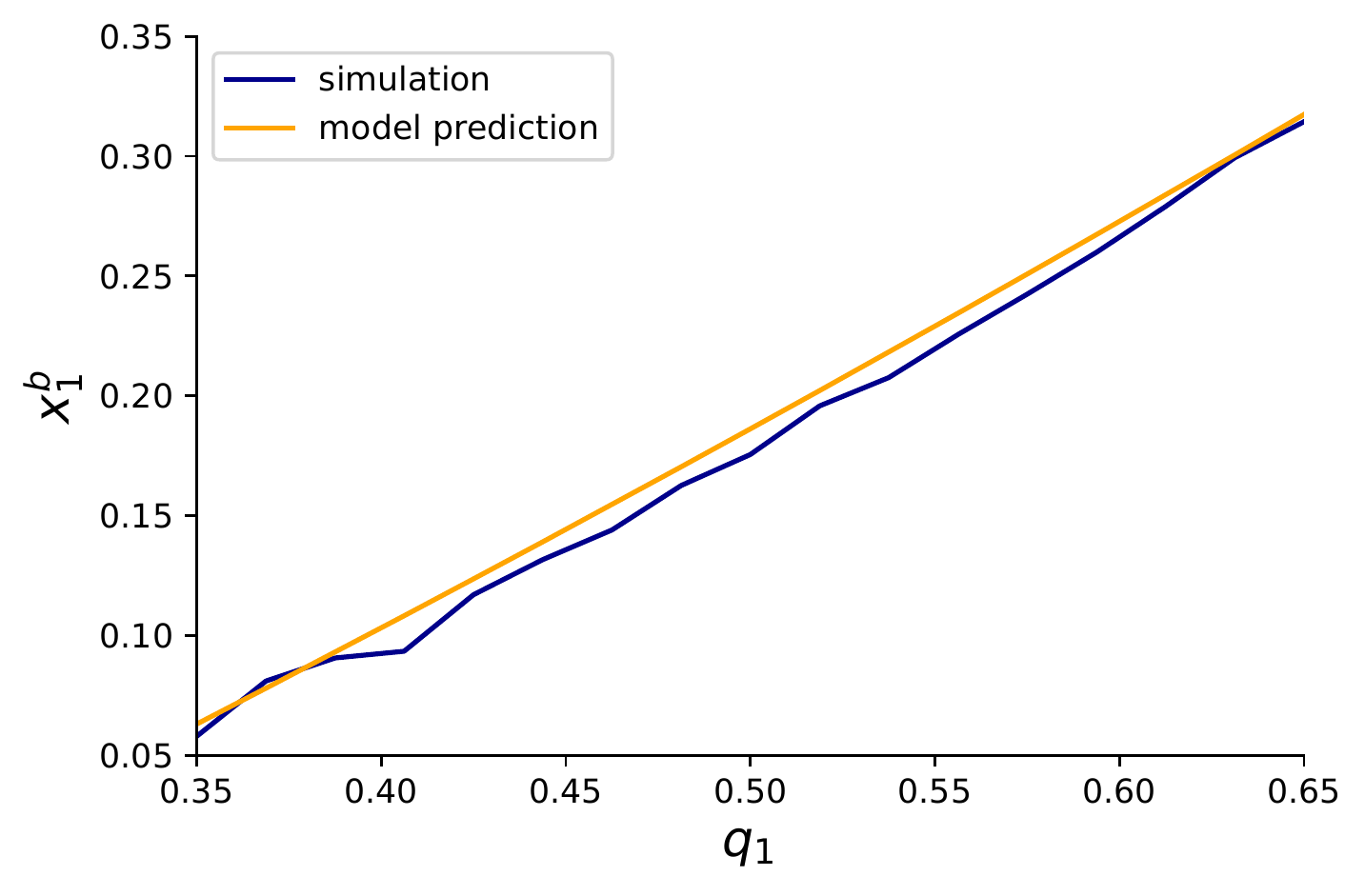}
        \caption{$x_1^b$}
    \end{subfigure}\qquad \qquad
    \begin{subfigure}[b]{0.45\textwidth}
       \centering \includegraphics[width=\textwidth]{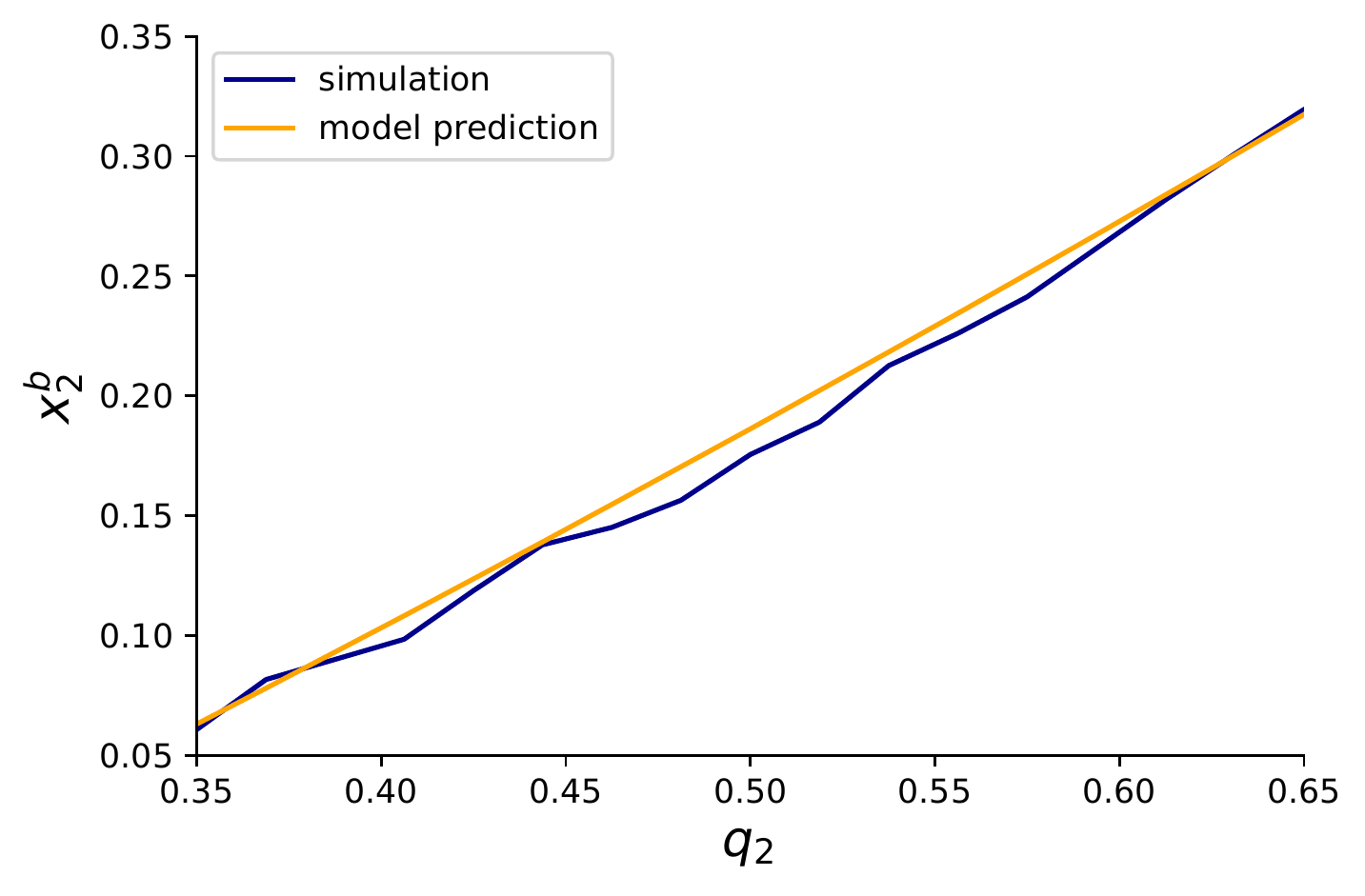}
        \caption{$x_2^b$}
    \end{subfigure}    
    \caption{Model prediction of the proportion of bifurcating lane users, $x_i^b$ is compared to simulation generated data.}\label{fig:model_pred}
\end{figure}

Having obtained the calibrated cost coefficient vector $\mathbf{C}$ in~\eqref{eq:cal_val}, we proceeded to validate our model using independently obtained simulation data from SUMO.
We validate our model under the total demand of 3200~\textit{vph}. As Figure~\ref{fig:model_pred} shows, our model successfully predicts the proportion of bifurcating lane users for either destination link. It is an obvious linear relationship which is consistent with our intuition. When the normalized demand for the same exit link increases, the proportion of bifurcating lane users increases due to the increasing cost for taking the feed--through lane designed exclusively for the exit link.
The simulation results show an impressive accuracy of our model in the prediction of vehicles' aggregate lane choice behavior. We also obtained similar results when the total demand was varied.
\section{Conclusion and Future Work} \label{sec:future}
In this work, we extended our previous work~\cite{negar2018bypassing} to another commonly encountered traffic diverge scenario. We assumed that vehicles are selfish and built a macroscopic model of vehicles' aggregating lane choice behavior at diverges with a middle bifurcating lane using Wardrop conditions. We then proved the existence and uniqueness of the resulting Wardrop equilibrium. Next, we used a microscopic traffic simulation software, SUMO, to generate data to calibrate and validate our model. The calibration process is shown to be easy. In the end, the validation results turned out to be promising, and in the future, we are looking forward to validating our model using real world data and under other similar traffic scenarios such as left-turning slots.

\section*{Acknowledgments}
This work was supported by the National Science Foundation under Grant
CPS 1545116. The authors thank Matthew Wright for his helpful input during the preparation of this article.


\bibliographystyle{IEEEtran}
\bibliography{IEEEabrv,MLU}

\end{document}